\newtheorem{fact}{Fact}
\newtheorem{theorem}{Theorem}
\newtheorem{lemma}{Lemma}
\newtheorem{question}{Question}
\title{Asymmetric Cryptosystem Using Careful Synchronization}
\author{Jakub Ruszil}
\affil{Jagiellonian University\\ Cracow\\ Poland
}
\date{February 2023}
\begin{document}

\maketitle

\begin{abstract}
    We present public-private key cryptosystem which utilizes the fact that checking whether a partial automaton is carefully synchronizing is $PSPACE$-complete, even in the case of a binary alphabet.
\end{abstract}

\section{Introduction}
Cryptography is essential branch of mathematics since the ancient times. It's main purpose is to ensure the privacy of information between sender and receiver sent through a possibly observed channel. Nowadays we differ symmetric cryptography - where the key used to cipher the message is the same as the one to decipher it - and asymmetric, where the key to cipher the message is commonly known and the one to decipher it is known only to the receiver of the message. In other words asymmetric cryptography is referred to as a public key cryptography, or a public-private key cryptography. The idea of public key cryptography was first mentioned in a confidential report GCHQ \cite{ellis} (UK Government Communications Headquarters) and later independently by Diffie and Hellman in 1976 \cite{diffie-hellman} along with the first practical public key cryptosystem based on knapsack problem. The mostly known asymmetric cryptosystem (RSA) was invented by Rivest, Shamir and Adleman in 1978 \cite{rsa} and is applicable since then to encryption and digital signatures. \newline
The concept of synchronization of finite automata is essential in various areas of computer science. It consists in regaining control over a system by applying a specific set of input instructions. These instructions lead the system to a fixed state no matter in which state it was at the beginning. The idea of synchronization has been studied for many classes of complete deterministic finite automata (DFA) \cite{berlinkov2014,berlinkov2016,eppstein1990,kari2001,kari2003,pin1983,rystsov1997,szykula2018,cerny1964,trahtman2007,volkov2008,volkov2019} and non-deterministic finite automata \cite{imreh1999,ito2004}. One of the most famous longstanding open problems in automata theory, known as \v{C}ern\'{y} Conjecture, states that for a given synchronizing DFA with $n$ states one can always find a synchronizing word of length at most $(n-1)^2$. This conjecture was proven for numerous classes of automata, but the problem is still not solved in general case. The concept of synchronization has been also considered in coding theory \cite{biskup2009,jurgensen2008}, parts orienting in manufacturing \cite{eppstein1990,natarajan1986}, testing of reactive systems \cite{sandberg2005} and Markov Decision Processes \cite{doyen2014,doyen2019}.

Allowing no outgoing transitions from some states for certain letters helps us to model a system for which certain actions cannot be accomplished while being in a specified state. This leads to the problem of finding a synchronizing word for a finite automaton, where transition function is not defined for all states. Notice that this is the most frequent case, if we use automata to model real-world systems. In practice, it rarely happens that a real system can be modeled with a DFA where transition function is total. The transition function is usually a partial one. This fact motivated many researchers to investigate the properties of partial finite automata relevant to practical problems of synchronization.\newline
We know that, in general case, checking if a partial automaton can be synchronized is PSPACE-complete \cite{martyugin2010b} even for binary alphabet \cite{vorel2016} and those facts are essential in our latter considerations.
\newline 
In this paper we present a public key cryptosystem utilizing fact, that checking if the PFA is carefully synchronizing is PSPACE-complete. This is however not the first attempt of trying to develop asymmetric cryptosystems with the notion of finite automata. Public key cryptography on finite automata with output is discussed in \cite{Tao2009} and uses the notion of invertible automata to provide the hard computational problem, inevitable to design such cryptosystem.\newline
The paper is organized as follows. In the section \ref{preliminaries_section} we provide with the basic notions and facts about synchronization of automata. In the sections \ref{encryption_section} and \ref{decryption_section} we present basic method of encryption and decryption using our cryptosystem. In the section \ref{section:extensions} we state couple of additional improvements to ensure better security. Finally we conclude the paper in the section \ref{section:conclusions} along with possible further research to the topic.
\section{Preliminaries}
\label{preliminaries_section}
\emph{Partial finite automaton} (PFA) is an ordered tuple $\mathcal{A} = (Q,\Sigma, \delta)$, where $\Sigma$ is a finite set of letters, $Q$ is a finite set of states and $\delta:{Q \times \Sigma^*}\rightarrow{Q}$ is a transition function, possibly not everywhere defined. In this definition we omit initial and final states, since they are not relevant to the problem of synchronization. For $\emph{w} \in \Sigma^\ast$ and $\emph{q} \in Q$ we define $\delta(\emph{q},\emph{w})$ inductively: $\delta(\emph{q},\epsilon) = q$ and $\delta(\emph{q},\emph{aw}) = \delta(\delta(\emph{q},\emph{a}), \emph{w})$ for $a \in \Sigma$, where $\epsilon$ is the empty word and $\delta(\emph{q}, \emph{a})$ is defined. A word $\emph{w} \in \Sigma^\ast$ is called \emph{carefully synchronizing} if there exists $\overline{q} \in Q$ such that for every $\emph{q} \in Q$, $\delta(\emph{q}, \emph{w}) = \overline{q}$ and all transitions $\delta(q, w')$, where $w'$ is any prefix of $w$, are defined. A PFA is called \emph{carefully synchronizing} if it admits any carefully synchronizing word. For a given $\mathcal{A}$ we define its \emph{power automaton} (which is itself a PFA) as $\mathcal{P}(\mathcal{A}) = (2^Q, \Sigma, \tau)$, where $2^Q$ stands for the set of all subsets of $Q$, and $\Sigma$ is the same as in $\mathcal{A}$. The transition function $\tau:{2^Q \times \Sigma}\rightarrow{2^Q}$ is defined as follows. Let $Q' \subseteq Q$. For every $a \in \Sigma$ we define $\tau(Q',a) = \bigcup_{q \in Q'} \delta(q,a)$ if
$\delta(q,a)$ is defined for all states $q\in Q'$,
otherwise $\tau(Q',a)$ is not defined. We also note $Q.w$ as an action of a word $w$ on a set of states $Q$ under the function $\delta$. Let $S \subseteq Q$. Then we denote $S.w^{-1}$ as a preimage of $S$ under the action of a word $w$. \newline 
We note that the above concepts can also be considered for \textit{deterministic finite automata} (DFA), for which the transition function is total. We define an $a$-cluster to be a DFA $\mathcal{A} = (Q, \{a\}, \delta)$ such that the automaton is connected. In other words it means that such automaton is a cycle on letter $a$ with paths that leads to the states of that cycle.  The set of states that induce a cycle in the $a$-cluster is referred to as the \textit{center} of the cluster. The \textit{depth} of the cluster is the length of the longest path to the center of the cluster. If $q$ belongs to the center of the $a$-cluster, the \textit{branch} of the state $q$ are the states that has a path to $q$ that does not have any other state belonging to the center. \textit{Destination} of the branch is a state in the center that has an in-transition from the last state of the branch.  Example of the $a$-cluster is depicted on Figure \ref{fig:cluster}. \newline
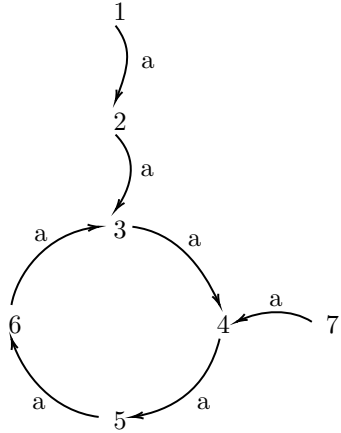
\begin{figure}[H]
    \centering
    \tikzset{every picture/.style={line width=0.75pt}} 

\begin{tikzpicture}[x=0.75pt,y=0.75pt,yscale=-0.5,xscale=0.5]

\draw    (299,77) .. controls (316.73,100.15) and (312.14,118.93) .. (299.58,151.5) ;
\draw [shift={(299,153)}, rotate = 291.21] [color={rgb, 255:red, 0; green, 0; blue, 0 }  ][line width=0.75]    (10.93,-3.29) .. controls (6.95,-1.4) and (3.31,-0.3) .. (0,0) .. controls (3.31,0.3) and (6.95,1.4) .. (10.93,3.29)   ;
\draw    (299,187) .. controls (318.7,207.19) and (319.97,232.72) .. (299.93,261.67) ;
\draw [shift={(299,263)}, rotate = 305.45] [color={rgb, 255:red, 0; green, 0; blue, 0 }  ][line width=0.75]    (10.93,-3.29) .. controls (6.95,-1.4) and (3.31,-0.3) .. (0,0) .. controls (3.31,0.3) and (6.95,1.4) .. (10.93,3.29)   ;
\draw    (497,376) .. controls (476.42,362.77) and (447.2,363.47) .. (422.51,375.26) ;
\draw [shift={(421,376)}, rotate = 333.43] [color={rgb, 255:red, 0; green, 0; blue, 0 }  ][line width=0.75]    (10.93,-3.29) .. controls (6.95,-1.4) and (3.31,-0.3) .. (0,0) .. controls (3.31,0.3) and (6.95,1.4) .. (10.93,3.29)   ;
\draw    (316,280) .. controls (359.34,285.42) and (386.19,319.46) .. (403.23,357.27) ;
\draw [shift={(404,359)}, rotate = 246.18] [color={rgb, 255:red, 0; green, 0; blue, 0 }  ][line width=0.75]    (10.93,-3.29) .. controls (6.95,-1.4) and (3.31,-0.3) .. (0,0) .. controls (3.31,0.3) and (6.95,1.4) .. (10.93,3.29)   ;
\draw    (404,393) .. controls (391.2,445.7) and (354.13,465.89) .. (317.67,471.74) ;
\draw [shift={(316,472)}, rotate = 351.54] [color={rgb, 255:red, 0; green, 0; blue, 0 }  ][line width=0.75]    (10.93,-3.29) .. controls (6.95,-1.4) and (3.31,-0.3) .. (0,0) .. controls (3.31,0.3) and (6.95,1.4) .. (10.93,3.29)   ;
\draw    (282,472) .. controls (236.69,466.58) and (205.93,429.63) .. (194.51,394.6) ;
\draw [shift={(194,393)}, rotate = 72.78] [color={rgb, 255:red, 0; green, 0; blue, 0 }  ][line width=0.75]    (10.93,-3.29) .. controls (6.95,-1.4) and (3.31,-0.3) .. (0,0) .. controls (3.31,0.3) and (6.95,1.4) .. (10.93,3.29)   ;
\draw    (194,359) .. controls (198.95,329.79) and (227.42,286.38) .. (280.39,280.18) ;
\draw [shift={(282,280)}, rotate = 174.18] [color={rgb, 255:red, 0; green, 0; blue, 0 }  ][line width=0.75]    (10.93,-3.29) .. controls (6.95,-1.4) and (3.31,-0.3) .. (0,0) .. controls (3.31,0.3) and (6.95,1.4) .. (10.93,3.29)   ;

\draw (322,105) node [anchor=north west][inner sep=0.75pt]   [align=left] {a};
\draw (321,214) node [anchor=north west][inner sep=0.75pt]   [align=left] {a};
\draw (369,285) node [anchor=north west][inner sep=0.75pt]   [align=left] {a};
\draw (451,346) node [anchor=north west][inner sep=0.75pt]   [align=left] {a};
\draw (378,450) node [anchor=north west][inner sep=0.75pt]   [align=left] {a};
\draw (212,450) node [anchor=north west][inner sep=0.75pt]   [align=left] {a};
\draw (214,280) node [anchor=north west][inner sep=0.75pt]   [align=left] {a};
\draw (294,51) node [anchor=north west][inner sep=0.75pt]   [align=left] {1};
\draw (294,162) node [anchor=north west][inner sep=0.75pt]   [align=left] {2};
\draw (294,271) node [anchor=north west][inner sep=0.75pt]   [align=left] {3};
\draw (398,367) node [anchor=north west][inner sep=0.75pt]   [align=left] {4};
\draw (508,367) node [anchor=north west][inner sep=0.75pt]   [align=left] {7};
\draw (294,464) node [anchor=north west][inner sep=0.75pt]   [align=left] {5};
\draw (188,367) node [anchor=north west][inner sep=0.75pt]   [align=left] {6};

\end{tikzpicture}
    \caption{Example of the $a$-cluster}
    \label{fig:cluster}
\end{figure}
Center of that $a$-cluster is the set $\{3,4,5,6\}$, the depth is $2$ and there are two branches: $b_1 = \{1,2\}$ and $b_2 = \{7\}$. Destination of the branch $b_1$ is the state $3$ and of the branch $b_2$ is state $4$.
\newline

We define the sum of two automata $\mathcal{A} = (Q_1, \Sigma_1, \delta_1)$ and $\mathcal{B} = (Q_2, \Sigma_2, \delta_2)$ as $\mathcal{A} \cup \mathcal{B} = (Q_1 \cup Q_2, \Sigma_1 \cup \Sigma_2, \delta_1 \cup \delta_2)$. 
We can now state the obvious fact, useful to decide whether a given PFA is carefully synchronizing.
\begin{fact} 
\label{fact:1}
Let $\mathcal{A}$ be a PFA and $\mathcal{P(A)}$ be its power automaton. Then $\mathcal{A}$ is carefully synchronizing if and only if for some state $q \in Q$ there exists a path in $\mathcal{P(A)}$ from $Q$ to $\{q\}$. The shortest synchronizing word for $\mathcal{A}$ corresponds to the shortest such path in $\mathcal{P(A)}$.
\end{fact}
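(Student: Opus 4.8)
The plan is to unfold the definitions and set up a direct correspondence between the action of a word on the full state set $Q$ in $\mathcal{A}$ and a path emanating from the vertex $Q$ in $\mathcal{P(A)}$. The central observation is that the transition function $\tau$ of the power automaton was defined to be partial in exactly the right way: $\tau(S,a)$ is defined precisely when $\delta(q,a)$ is defined for every $q \in S$. This is nothing but the local ``carefulness'' condition, so tracking definedness inside $\mathcal{P(A)}$ automatically tracks definedness of all the relevant transitions in $\mathcal{A}$ simultaneously for all states.

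First I would prove, by induction on $|w|$, the following bridging claim: for any $S \subseteq Q$ and any $w \in \Sigma^\ast$, the value $\tau(S,w)$ is defined if and only if $\delta(q,w')$ is defined for every $q \in S$ and every prefix $w'$ of $w$, and in that case $\tau(S,w) = S.w = \{\delta(q,w) : q \in S\}$. The base case $w = \epsilon$ is immediate from $\delta(q,\epsilon)=q$. For the inductive step, writing $w = a w''$, one applies the definition of $\tau$ to the first letter $a$ (which records exactly the requirement that $\delta(q,a)$ be defined for all $q \in S$) and then invokes the induction hypothesis for $w''$ on the set $\tau(S,a)$.

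With this claim in hand both directions follow quickly. For the forward direction, a carefully synchronizing word $w$ with target $\overline{q}$ satisfies $\delta(q,w) = \overline{q}$ for all $q \in Q$ with every prefix-transition defined; the claim then yields $\tau(Q,w) = \{\overline{q}\}$ with every intermediate value defined, and reading the letters of $w$ one at a time exhibits a path in $\mathcal{P(A)}$ from $Q$ to $\{\overline{q}\}$. Conversely, a path from $Q$ to a singleton $\{q\}$ spells out a word $w$ for which $\tau(Q,w) = \{q\}$ is defined together with all intermediate sets; by the claim every prefix-transition is then defined for every state of $Q$ and $\delta(p,w) = q$ for all $p \in Q$, so $w$ is carefully synchronizing. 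Finally, since each letter read along the path corresponds to exactly one edge, the length of $w$ equals the number of edges traversed, whence a shortest carefully synchronizing word corresponds to a shortest such path and vice versa.

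I do not anticipate a genuine obstacle, as the statement is essentially a translation of definitions. The only point demanding care is to verify that the partiality of $\tau$ coincides with the ``all prefixes defined for all states'' clause in the definition of careful synchronization; this equivalence is precisely what the inductive bridging claim isolates and makes explicit.
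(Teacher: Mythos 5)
Your proof is correct. The paper states this as an obvious fact and gives no proof at all, so your argument --- the inductive bridging claim that $\tau(S,w)$ is defined exactly when $\delta(q,w')$ is defined for all $q \in S$ and all prefixes $w'$ of $w$, with $\tau(S,w) = S.w$ in that case --- is simply the careful unfolding of definitions that the paper leaves implicit, and it correctly isolates the one point that needs checking, namely that the partiality of $\tau$ encodes precisely the all-prefixes-defined clause in the definition of careful synchronization.
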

An example of a carefully synchronizing automaton $\mathcal{A}_{car}$ is depicted in Fig. \ref{fig:automaton}. One of its carefully synchronizing words is $aa(ba)^3bbab$.
\begin{figure}[H]
    \centering
        \input{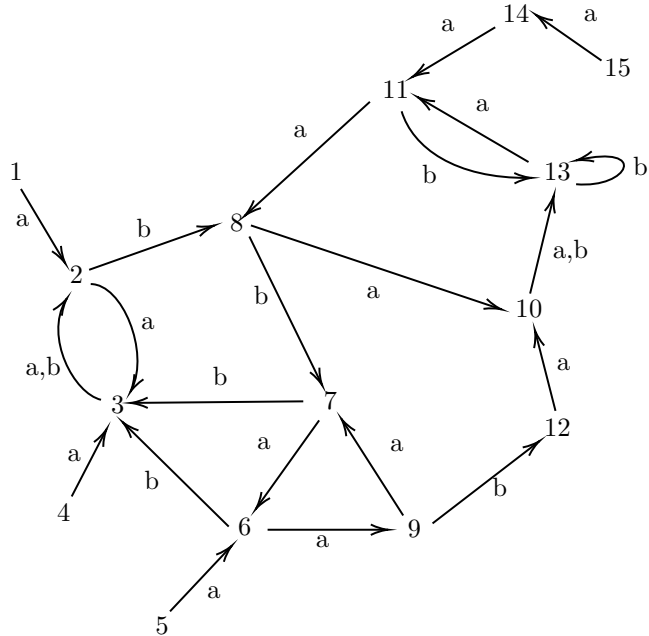}
    \caption{A carefully synchronizing automaton $\mathcal{A}_{car}$.}
    \label{fig:automaton}
\end{figure}
We recall the result of Vorel \cite{vorel2016} about the complexity of deciding whether a PFA is carefully synchronizing.

\begin{theorem}
Given a PFA $\mathcal{A} = (Q, \Sigma, \delta)$, checking if $\mathcal{A}$ is carefully synchronizing is $PSPACE$-complete even for $|\Sigma| = 2$.
\end{theorem}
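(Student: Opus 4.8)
The plan is to prove both directions of completeness: membership in PSPACE, which is routine, and PSPACE-hardness under the binary-alphabet restriction, where the real work lies.

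For the upper bound I would invoke Fact~\ref{fact:1}: $\mathcal{A}$ is carefully synchronizing iff there is a path in $\mathcal{P}(\mathcal{A})$ from $Q$ to some singleton $\{q\}$. Although $\mathcal{P}(\mathcal{A})$ has $2^{|Q|}$ states, each such state is a subset of $Q$ and is described by $|Q|$ bits, and each transition $\tau(S,a)$ is computable in polynomial time, including the check that $\delta(q,a)$ is defined for every $q \in S$. I would therefore guess a carefully synchronizing word one letter at a time, storing only the current subset, applying $\tau$, rejecting the instant a transition becomes undefined, and accepting upon reaching a singleton. This is a nondeterministic polynomial-space procedure, so the problem lies in $NPSPACE = PSPACE$ by Savitch's theorem.

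For hardness I would first recall Martyugin's theorem \cite{martyugin2010b} that careful synchronization is PSPACE-complete over an alphabet of unbounded size, and then reduce the alphabet down to two letters. (A self-contained alternative would reduce directly from DFA-intersection-nonemptiness, building a PFA whose undefined transitions force any carefully synchronizing word to spell out a word accepted by every component automaton; but chaining through the known unbounded-alphabet result is cleaner.) The core construction is a binary encoding of a $k$-letter instance $\mathcal{B} = (Q, \{a_1,\dots,a_k\}, \delta)$. I would fix $m = \lceil \log_2 k \rceil$, assign each $a_i$ a distinct codeword $c_i \in \{0,1\}^m$, and attach to each state $q$ a \emph{reading gadget}: a binary subautomaton of depth $m$ whose root is $q$ and whose leaves re-enter the main state set so that feeding the block $c_i$ sends $q$ to $\delta(q,a_i)$, exactly mirroring $\mathcal{B}$. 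The decisive point is to use \emph{undefined} transitions to prevent misalignment: transitions exist only along codeword branches, so any binary string that is not a concatenation of whole codewords kills some state and blocks every path in the power automaton. I would add a few auxiliary states and a designated collector to handle the internal tree levels uniformly, ensuring all states re-align to block boundaries after each length-$m$ segment.

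It then remains to verify equivalence: a carefully synchronizing word $w$ of $\mathcal{B}$ maps, by substituting each $a_i$ with $c_i$, to a carefully synchronizing word of the binary automaton; conversely, the definedness constraints force any carefully synchronizing word of the binary automaton to factor into whole codeword blocks, hence to decode to a carefully synchronizing word of $\mathcal{B}$. The reduction is plainly polynomial, yielding PSPACE-hardness for $|\Sigma|=2$ and, with the upper bound, PSPACE-completeness. The step I expect to be the main obstacle is precisely this gadget design: guaranteeing simultaneously that complete blocks faithfully simulate each original letter, that no partial or misaligned block can slip the automaton into a synchronized configuration, and that the many auxiliary tree-states introduced across all gadgets can themselves be carefully synchronized without creating unintended shortcuts. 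Placing the undefined transitions so that these three requirements hold \emph{together}, rather than merely individually, is the delicate part of the argument.
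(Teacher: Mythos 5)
First, a point of comparison: the paper does not prove this statement at all --- it is recalled from Vorel \cite{vorel2016} and stated without proof --- so your attempt has to be judged against what such a proof actually requires, not against an in-paper argument. Your upper bound is correct and standard: guess a word letter by letter, store only the current subset of $Q$, reject the moment a transition is undefined, accept on reaching a singleton, and appeal to $NPSPACE = PSPACE$. (One small omission: you should cap the number of guessed letters by $2^{|Q|}$, since a shortest path in the power automaton never revisits a subset; without some such bound the nondeterministic machine need not halt, and shortest carefully synchronizing words can be exponentially long, so this counter is genuinely needed.)

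The hardness half, however, is a plan rather than a proof, and the gap it leaves open is exactly where the content of the theorem lies. Careful synchronization starts from the set of \emph{all} states, and in your construction that set includes every internal state of every reading gadget. The encoded word $c_{i_1}c_{i_2}\cdots$ is block-aligned for the original states sitting at the roots of the gadgets, but it arrives mid-block at each tree-internal state: if, as you propose, transitions ``exist only along codeword branches'' so that misaligned strings are killed, then the very word that should witness synchronization is undefined on those internal states, and your reduction maps yes-instances of $\mathcal{B}$ to no-instances of the binary automaton. If instead you add transitions (your ``collector'' and re-alignment states) to keep internal states alive, you must then prove that no misaligned word can exploit those extra transitions to synchronize spuriously, and that the collector itself can be merged with the rest. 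These requirements pull in opposite directions; reconciling them is the entire difficulty of the theorem, it is what forces the published proof to be substantially more intricate than a per-state code tree, and you explicitly flag it as ``the delicate part'' without resolving it. As it stands, the proposal establishes membership in $PSPACE$ but does not establish $PSPACE$-hardness for $|\Sigma| = 2$.
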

Further we assume that $\Sigma = \{a,b\}$ and the letter $a$ is defined for all the states wherever not mentioned otherwise. Having that we can go to the description of our method.

\section{Basic encryption}
\label{encryption_section}
Let a plain text be the word $u\in \{0,1\}^*$. Choose a public key to be a carefully synchronizing PFA $\mathcal{A} = (Q, \Sigma, \delta)$ and a private key to be any word $w$ that carefully synchronizes $\mathcal{A}$. For simplicity of further statements we note $\mathcal{A}_i = (Q_i, \Sigma, \delta_i)$ to be isomorphic to $\mathcal{A}$ for any $i \in \mathbb{N}$. First we describe a construction that is a ciphertext. \newline
Define an automaton $\mathcal{P} = (\{p_1, p_2, .., p_{|u|+1}\}, \{0,1\}, \gamma)$ where $\gamma$ is defined as follows: for $i \in \{1, .., |u| + 1\}$ set $\gamma(p_{i-1}, u_i) = p_i$, where $u_i$ is $i$-th letter of a word $u$. In other words we encode our plaintext in the form of a directed path, where consecutive edges correspond to the consecutive letters of the word $u$. Encryption consists of four steps:
\begin{enumerate}
    \item \label{enc:point1}  Compute $\mathcal{B} = \bigcup_{i=1}^{|u| + 1}\mathcal{A}_i$ and denote $\bigcup_{i=1}^{|u| + 1} \delta_i = \rho$, $P = \bigcup_{i=1}^{|u| + 1} Q_i$
    \item \label{enc:point2} for any transition $(p_i, p_j)$ in $\mathcal{P}$, labelled with a letter $x \in \{0,1\}$ choose any pair of states $q^i \in Q_i$ and $q^j \in Q_j$, and set $\rho(q^i, x) = q^j$,
    \item \label{enc:point3} for all $i \in \{1, .., |u| + 1\}$ and for every letter $a \in \Sigma$, if $q^i \in Q_i$ and $\delta(q^i, a)$ is undefined, then choose any $j$ and any state $q^j \in Q_j$ and set $\rho(q^i, a) = q^j$,
    \item \label{enc:point4} for all $i \in \{1, .., |u| + 1\}$ choose $k_i \in \mathbb{N}$. Choose $k_i$ pairs $(q_p^i, q_r^i)$ and a letter $x \in \{0,1\}$ and define $\rho(q_p^i, x) = q_r^i$
\end{enumerate}
Automaton $\mathcal{B}$ is our ciphertext. It is straightforward from the construction, that computing such automaton is polynomial in terms of $Q, \Sigma$ and length of the plaintext. We also state two obvious observations.
\begin{fact}
\label{fact:2}
After removing letters $x \in \{0,1\}$ from automaton $\mathcal{B}$ we obtain a DFA over $\Sigma$.
\end{fact}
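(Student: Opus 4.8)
The plan is to track the transition function $\rho$ of $\mathcal{B}$ letter by letter and to observe that none of the four construction steps leaves a $\Sigma$-transition undefined and that none introduces a second value on $\Sigma$. Removing the letters $0,1$ from $\mathcal{B}$ amounts to restricting $\rho$ to the sub-alphabet $\Sigma=\{a,b\}$, so it suffices to show that $\rho|_{P\times\Sigma}$ is a total function $P\times\Sigma\to P$: totality yields completeness and single-valuedness yields determinism, which together are exactly the definition of a DFA over $\Sigma$ recalled in Section~\ref{preliminaries_section}.

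First I would record the effect of each step on the $\Sigma$-part of $\rho$. After Step~\ref{enc:point1} the automaton $\mathcal{B}$ is the disjoint union $\bigcup_i \mathcal{A}_i$, so $\rho|_{P\times\Sigma}$ is precisely $\bigcup_i \delta_i$, a partial function that agrees on each copy $Q_i$ with the (possibly partial) transition function of $\mathcal{A}$. Steps~\ref{enc:point2} and~\ref{enc:point4} only assign values $\rho(\cdot,x)$ for $x\in\{0,1\}$, hence they leave $\rho|_{P\times\Sigma}$ untouched. The one step that acts on $\Sigma$ is Step~\ref{enc:point3}: for every index $i$, every state $q^i\in Q_i$ and every letter $c\in\Sigma$ for which $\delta_i(q^i,c)$ is still undefined, it picks a target $q^j$ and sets $\rho(q^i,c)=q^j$. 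Thus after Step~\ref{enc:point3} there is no pair $(q,c)\in P\times\Sigma$ with $\rho(q,c)$ undefined, which is exactly totality.

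It then remains to argue determinism, i.e.\ that $\rho|_{P\times\Sigma}$ is single-valued. The key point is that the pairs touched by Step~\ref{enc:point1} and by Step~\ref{enc:point3} are disjoint: Step~\ref{enc:point3} assigns $\rho(q^i,c)$ \emph{only} when $\delta_i(q^i,c)$ was previously undefined, so it never overwrites a value inherited from the disjoint union, and it performs at most one assignment per pair $(q^i,c)$. Consequently every $(q,c)\in P\times\Sigma$ receives exactly one target and $\rho|_{P\times\Sigma}$ is a genuine function. Combining the two previous paragraphs, $(P,\Sigma,\rho|_{P\times\Sigma})$ is total and deterministic, which is the claimed DFA.

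I do not expect a genuine obstacle here, as the statement is flagged as an obvious observation; the only place needing a line of care is the determinism check, namely verifying that Step~\ref{enc:point3} fires precisely on the formerly undefined $\Sigma$-transitions and therefore cannot clash with the assignments already fixed in Step~\ref{enc:point1}.
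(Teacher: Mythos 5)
Your proof is correct. The paper offers no proof at all for this statement --- it is presented as one of two ``obvious observations'' following the remark that everything is straightforward from the construction --- and your argument (Steps 2 and 4 touch only the letters $0,1$; Step 1 contributes the partial functions $\delta_i$; Step 3 fills in exactly the undefined $\Sigma$-transitions, giving totality without ever overwriting an existing value, hence determinism) is precisely the routine verification the paper implicitly has in mind.
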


\begin{fact}
After removing letters $a \in \Sigma$ from automaton $\mathcal{B}$ we obtain a digraph labelled with letters $x \in \{0,1\}$ with longest path between the vertices of length 1.
\end{fact}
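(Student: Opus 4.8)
The plan is to track exactly which transitions survive the deletion of the letters $a,b$ and then show that what remains cannot be concatenated. First I would pin down the surviving edge set. Every transition created in step~\ref{enc:point1} (the disjoint copies $\mathcal{A}_i$) and every transition created in step~\ref{enc:point3} (completing the undefined $\Sigma$-moves) carries a letter of $\Sigma=\{a,b\}$, so all of these are erased. Hence the digraph $D$ obtained after deleting $a,b$ retains \emph{precisely} the transitions introduced in steps~\ref{enc:point2} and~\ref{enc:point4}, each labelled by $0$ or $1$. This is the exact complement of the situation in Fact~\ref{fact:2}, where deleting $0,1$ leaves the $\Sigma$-transitions; here we keep the other half.

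Next I would describe these surviving edges concretely. A step-\ref{enc:point2} edge is the image of a transition $(p_i,p_{i+1})$ of the path $\mathcal{P}$ and runs from a chosen state of the copy $Q_i$ into a chosen state of the next copy $Q_{i+1}$, with exactly one such edge per edge of $\mathcal{P}$; a step-\ref{enc:point4} edge is a decoy $q_p^i\to q_r^i$. Thus every edge of $D$ has a well-defined source state and target state among the states selected during the construction, and by determinism each vertex emits at most one $0$-edge and at most one $1$-edge.

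The core of the argument is to rule out directed paths of length $\ge 2$. Such a path would be a pair of consecutive edges $v_1\to v_2\to v_3$, which forces the middle vertex $v_2$ to be simultaneously a target endpoint of one $0,1$-transition and a source endpoint of another. I would therefore show that the set of states used as targets and the set used as sources for the $0,1$-transitions are disjoint: within any copy $Q_i$ the number of in-pointers and out-pointers is bounded, and in step~\ref{enc:point2} the source in $Q_i$ and the target in $Q_{i+1}$ are selected independently, so one can always choose endpoints inside each copy without reusing a state that already serves in the opposite role. Under this disjointness no vertex $v_2$ carries both an incoming and an outgoing surviving edge, so every maximal directed path of $D$ consists of a single edge, giving longest path length $1$ (exactly $1$ whenever $|u|\ge 1$, since step~\ref{enc:point2} then contributes at least one edge).

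The only real obstacle I anticipate is this middle step: making precise that the free choices (``choose any pair of states'') in steps~\ref{enc:point2} and~\ref{enc:point4} keep the incoming-endpoint and outgoing-endpoint sets disjoint inside every copy. Once that is established, the encoding path $\mathcal{P}$ is genuinely shredded into isolated single edges rather than being reconstructable as one long $0,1$-path, and the assertion follows immediately.
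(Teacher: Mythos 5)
Your reduction of the claim --- the surviving edges are exactly those added in steps \ref{enc:point2} and \ref{enc:point4}, so it suffices to rule out a vertex that is simultaneously the head of one surviving edge and the tail of another --- is the natural one, and it is essentially all the paper offers as well: the statement appears there as one of ``two obvious observations'', with no proof at all. The trouble is precisely the step you flag as an obstacle, and it is a genuine gap, because the needed disjointness is simply not a consequence of the construction. Steps \ref{enc:point2} and \ref{enc:point4} say ``choose any pair of states'', and under that rule the following run is admissible: encode the first letter of $u$ by $\rho(q^1,u_1)=t$ with $t\in Q_2$, then encode the second letter by $\rho(t,u_2)=q^3$, i.e.\ take the source in $Q_2$ to be the very state $t$ that was just used as a target. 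This produces a $\{0,1\}$-labelled path of length $2$, so the fact as literally stated fails for some admissible choices; even a single copy suffices, e.g.\ choosing in step \ref{enc:point4} the pairs $(s,t)$ and $(t,v)$ with letters $0$ and $1$. Determinism, which you invoke, only bounds out-degrees and is no help against such concatenations.

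What your third paragraph actually argues is that the encryptor \emph{can} make its choices so that head-states and tail-states remain disjoint (which also tacitly requires $|Q|$ to be large enough relative to the $k_i$). That is an existence statement about some runs of the encryption, whereas the fact quantifies over every ciphertext $\mathcal{B}$ the procedure may output; the two are not the same, and no argument can bridge them without an extra hypothesis. The honest fix is to build the invariant into the construction --- in steps \ref{enc:point2} and \ref{enc:point4}, never pick as a tail a state already used as a head, and vice versa --- after which your argument closes in one line: no vertex of the resulting digraph has both positive in-degree and positive out-degree, hence no directed path of length $2$ exists. State that convention explicitly rather than appealing to ``one can always choose''; as written, neither your proposal nor the paper's construction actually guarantees the claim.
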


Procedure of encrypting the word $01$ is depicted on figures \ref{fig:first_step}, \ref{fig:second_step}, \ref{fig:third_step} and \ref{fig:fourth_step}. As a public key we take the automaton depicted on Figure \ref{fig:automaton}.
\begin{figure}[H]
    \centering
    \input{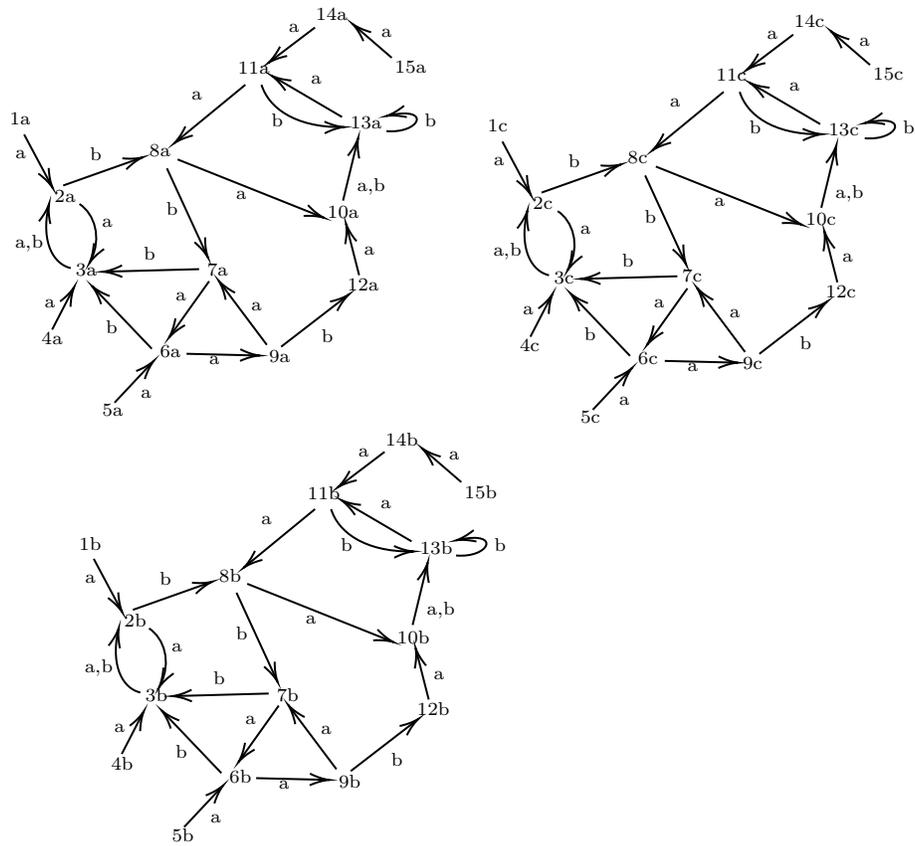}
    \caption{First step of encryption.}
    \label{fig:first_step}
\end{figure}
The first step involves summation of three copies of the public key that correspond to the three vertices of the word $01$ encoded as a labeled path. The first vertex of the path corresponds to the automaton induced by the states with suffix $a$, the second - by the states with the suffix $b$, and the third - by the suffix $c$.
\begin{figure}[H]
    \centering
    \input{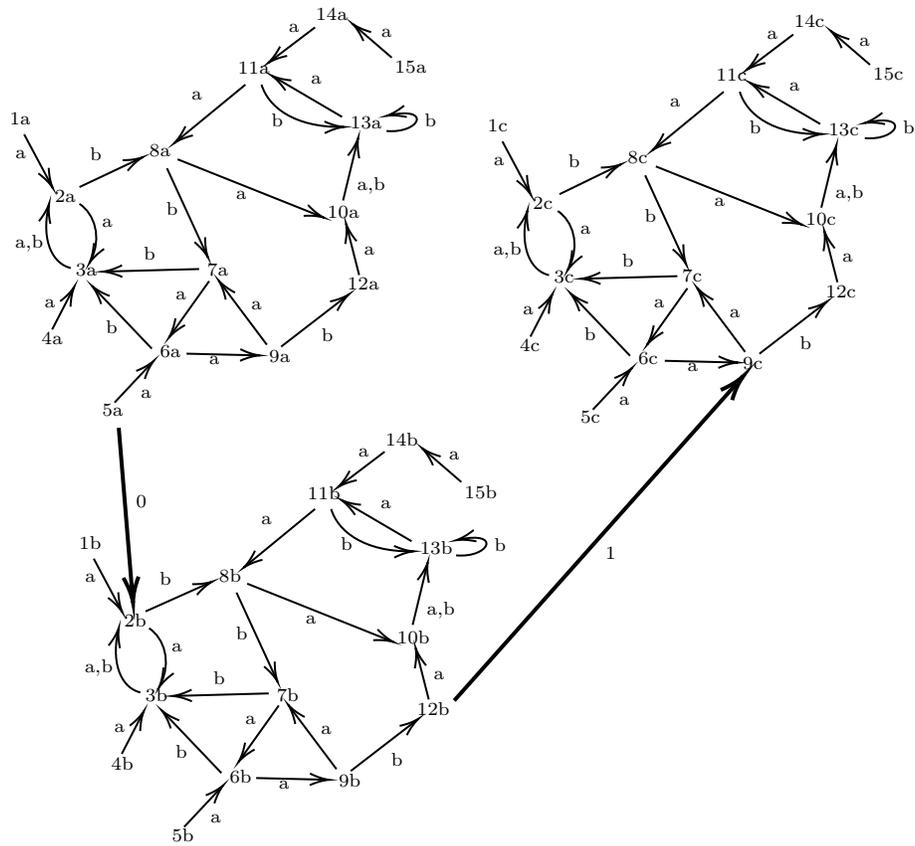}
    \caption{Second step of encryption.}
    \label{fig:second_step}
\end{figure}
The second step involves adding the transitions $0$ and $1$ to the states of automata that correspond to the in and out vertices of the transition. In the above example we define transition $\rho(5a, 0) = 2b$, which corresponds to the first transition of the encoded word, and $\rho(12b, 1) = 9c$, which corresponds to the second transition of the encoded word. Transitions added in this step are bolded.
\begin{figure}[H]
    \centering
    \input{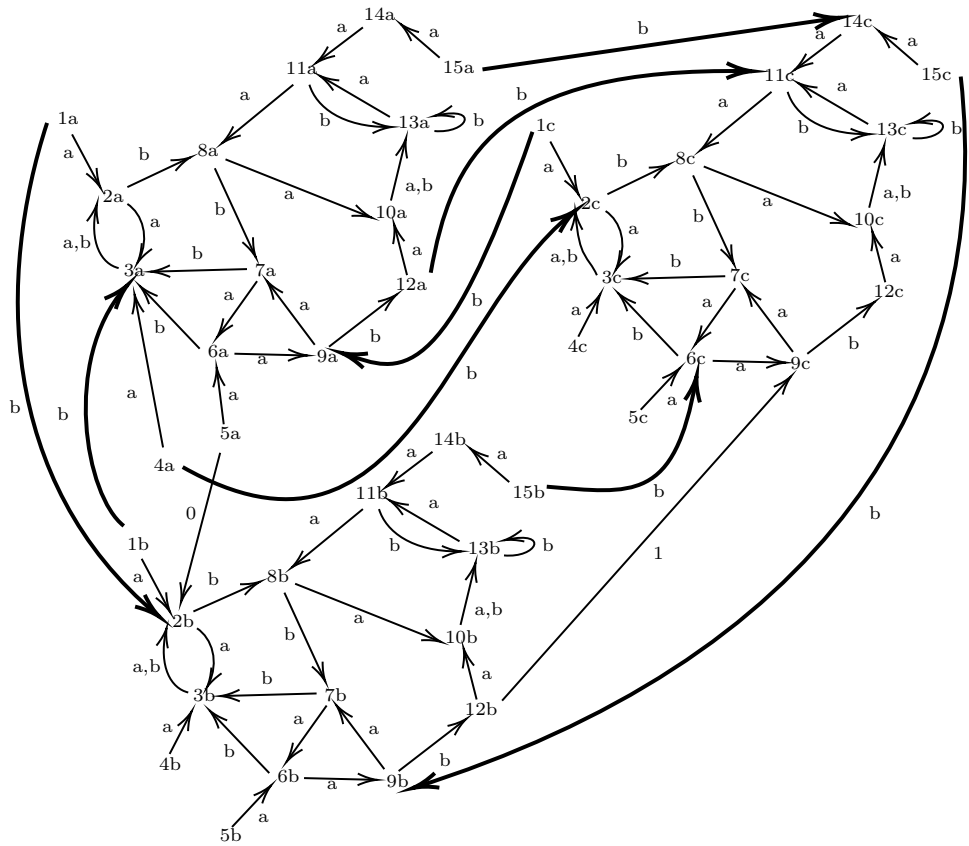}
    \caption{Third step of encryption.}
    \label{fig:third_step}
\end{figure}
The third step involves adding transitions from $\Sigma$ to those states in $\mathcal{B}$, which have undefined transitions for letters from $\Sigma$. In that case we add only $b$ letters. For example we defined $\rho(1a, b) = 2b$. We should act similarly for all states, for which $b$ is undefined, but we have only added some of the necessary transitions so the figure is readable.
\begin{figure}[H]
    \centering
    \input{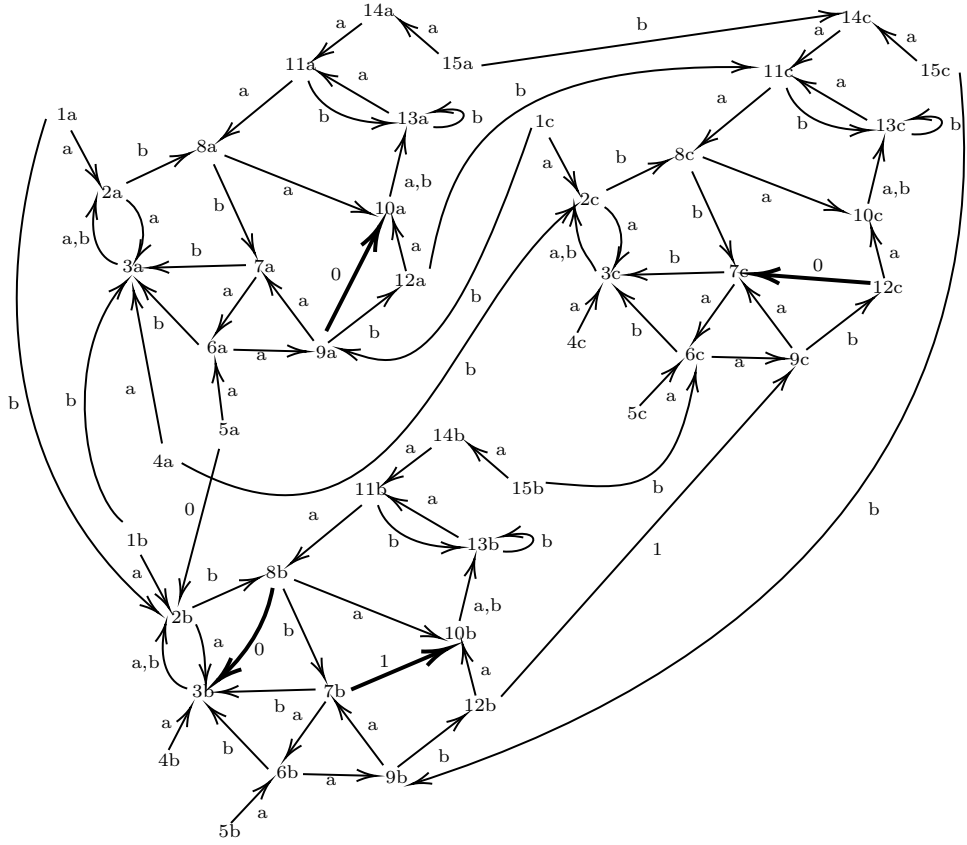}
    \caption{Fourth step of encryption.}
    \label{fig:fourth_step}
\end{figure}
The last step involves adding some number of transitions under letters from the alphabet $\{0,1\}$ between states belonging to the same copy of a public key in $\mathcal{B}$. In that case we have added transition $\rho(9a, 0) = 10a$ (first copy), transitions $\rho(7b, 1) = 10b$ and $\rho(8b, 1) = 3b$ (second copy) and transition $\rho(12c, 0) = 7c$ (third copy).
\section{Basic decryption}
\label{decryption_section}
For that section we assume that we have a ciphertext automaton $\mathcal{B} = (P, \Sigma, \rho)$ constructed from a public key $\mathcal{A} = (Q, \Sigma, \delta)$, and that we know a private key $w$ which is a carefully synchronizing word for the automaton $\mathcal{A}$. First we state a lemma.
\begin{lemma}
\label{lemma:1}
Let $Q.w = q_l$. After removing letters $x \in \{0,1\}$ from automaton $\mathcal{B}$ we have that $P.w = \{q_l^1, q_l^2, .., q_l^{|u| + 1}\}$.
\end{lemma}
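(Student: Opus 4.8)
The plan is to show that, after deleting the $\{0,1\}$-labelled edges, the action of $w$ on each copy $Q_i$ inside $\mathcal{B}$ reproduces exactly the action of $w$ on $\mathcal{A}$, so that each copy collapses independently onto its own image $q_l^i$. By Fact~\ref{fact:2} the removal of the letters $0,1$ leaves a genuine (total) DFA over $\Sigma$, so $\rho(\cdot,w)$ is a well-defined total map on $P$ and $P.w$ makes sense.

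First I would isolate the only transitions that could spoil the argument. The copies $\mathcal{A}_i$ contribute their original, possibly partial, maps $\delta_i$, since $\rho \supseteq \bigcup_i \delta_i$. Among the $\Sigma$-transitions, the only ones added on top of these are those introduced in step~\ref{enc:point3}, and by construction each such edge $\rho(q^i,a)$ is created \emph{precisely} when $\delta_i(q^i,a)$ was previously undefined. Consequently, at any state where the original transition is defined, $\rho$ agrees with $\delta_i$ and in particular stays inside copy $i$; the inter-copy edges of step~\ref{enc:point3} can only be traversed from states whose original $\delta_i$-transition is missing.

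The core step is an induction on the prefixes of $w$. Writing $q^i$ for the state of $Q_i$ corresponding to $q \in Q$ under the isomorphism $\mathcal{A} \cong \mathcal{A}_i$, I claim $q^i.w' = (q.w')^i$ for every prefix $w'$ of $w$. The base case $w'=\epsilon$ is immediate. For the inductive step, assume $q^i.w' = (q.w')^i$ and let $a$ be the next letter of $w$. Since $w$ carefully synchronizes $\mathcal{A}$, the transition $\delta(q.w',a) = \delta(q,w'a)$ is defined in $\mathcal{A}$ — this is exactly the requirement that \emph{every} prefix of $w$ keeps all transitions defined, not merely $w$ itself — so $\delta_i$ is defined at $(q.w')^i$ and equals the original transition there. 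Hence the step~\ref{enc:point3} edges never fire, and $q^i.(w'a) = \delta_i((q.w')^i,a) = (q.w'a)^i$, staying within copy $i$.

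Applying the claim with $w'=w$ and using $Q.w = q_l$ in $\mathcal{A}$ gives $q^i.w = (q.w)^i = q_l^i$ for every $q^i \in Q_i$, so $Q_i.w = \{q_l^i\}$; taking the union over $i$ yields $P.w = \{q_l^1, q_l^2, \dots, q_l^{|u|+1}\}$, as required. I expect the main obstacle to be exactly the verification that the transitions added in step~\ref{enc:point3}, which may jump between copies and would otherwise destroy the copy-by-copy structure, are never traversed. This is where careful synchronization is indispensable: the defined-ness of every prefix of $w$ is precisely what confines each trajectory to its home copy and forces $\rho$ to coincide with the original $\delta_i$ along the whole run.
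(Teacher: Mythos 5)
Your proof is correct and follows essentially the same route as the paper's: both arguments rest on the observation that, because $w$ carefully synchronizes $\mathcal{A}$, every prefix of $w$ keeps the original transitions defined, so the filler transitions of step~\ref{enc:point3} are never traversed and each copy $Q_i$ evolves exactly as $Q$ does in $\mathcal{A}$, collapsing onto $q_l^i$. Your pointwise induction $q^i.w' = (q.w')^i$ is just a slightly more explicit rendering of the paper's claim that $Q_i.w' = \{q_{k_1}^i,\dots,q_{k_s}^i\}$ whenever $Q.w' = \{q_{k_1},\dots,q_{k_s}\}$, and it yields both inclusions at once.
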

\begin{proof}
It is immediate from construction, since we have not removed any transitions from $\Sigma$ within any $Q_i$, that for any $Q_i \in P$ holds $Q_i.w = q_l^i$, since $w$ carefully synchronizes $\mathcal{A}$ and each $Q_i$ on $\Sigma$ induces an isomorphic copy of $\mathcal{A}$. So we have, that $\{q_l^1, q_l^2, .., q_l^{|u| + 1}\} \subseteq P.w$. To prove that $P.w \subseteq \{q_l^1, q_l^2, .., q_l^{|u| + 1}\} \subseteq P.w$ it suffices to notice, that, from fact \ref{fact:2} automaton $\mathcal{B}$ is deterministic and for any prefix $w'$ of $w$ if $Q.w' = \{q_{k_1}, .., q_{k_s}\}$, then $Q_i.w' = \{q_{k_1}^i, .., q_{k_s}^i\}$.  
\end{proof}

\begin{lemma}
\label{lemma:2}
There exist an algorithm with $O(|P||w|)$ time complexity and $O(|P||w|)$ space complexity which computes a partition of $P$ on sets $Q_1$, $Q_2$, ..., $Q_{|u|+ 1}$. 
\end{lemma}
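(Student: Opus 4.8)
The plan is to reduce the whole problem to a single evaluation of the private key $w$ on each state. Let $\mathcal{B}'$ denote the automaton obtained from $\mathcal{B}$ by deleting every transition labelled by $0$ or $1$; by Fact~\ref{fact:2} this is a total DFA over $\Sigma$, so the map $r\colon P\to P$ given by $r(p)=p.w$ is well defined. The key claim I would establish is that the fibres of $r$ are precisely the copies $Q_1,\dots,Q_{|u|+1}$. Writing $Q.w=q_l$ as in Lemma~\ref{lemma:1}, the argument of that lemma shows that for each $i$ and each $p\in Q_i$ the trajectory of $p$ under $w$ stays inside $Q_i$ and ends at $q_l^i$; this is exactly where careful synchronization is used, since $w$ never triggers a transition that was undefined in $\mathcal{A}$ and hence never uses a transition added in step~\ref{enc:point3} that could leave $Q_i$. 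Thus $r(Q_i)=\{q_l^i\}$, and since the states $q_l^1,\dots,q_l^{|u|+1}$ lie in pairwise disjoint copies they are pairwise distinct; consequently $r^{-1}(q_l^i)=Q_i$ for every $i$.

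Given this, the algorithm is immediate. First build $\mathcal{B}'$ by discarding the $\{0,1\}$-transitions; then, for each state $p\in P$, compute $r(p)=p.w$ by following the word $w$ letter by letter in $\mathcal{B}'$; finally bucket the states of $P$ according to the value $r(p)$, each bucket being one block $Q_i$ of the partition. Correctness is exactly the fibre computation above: two states share a bucket if and only if they belong to the same copy.

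For the complexity, computing $r(p)$ for a fixed $p$ costs $|w|$ transition look-ups, each $O(1)$ in the adjacency representation of $\mathcal{B}'$, so all $|P|$ evaluations take $O(|P|\,|w|)$ time; the final bucketing is $O(|P|)$ using the images $r(p)$ as keys. Storing $\mathcal{B}'$, the word $w$, and the table of images fits in $O(|P|+|w|)$ space, which is within the claimed $O(|P|\,|w|)$ bound (the looser bound also accommodates keeping the full trajectories if one prefers to verify them rather than only their endpoints).

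The hard part will be the completeness half of the fibre claim, namely that \emph{no} state of a foreign copy $Q_j$ with $j\ne i$ is mapped to $q_l^i$. This does not follow from determinism alone, because the step-\ref{enc:point3} transitions make $\mathcal{B}'$ genuinely connect different copies; it follows only because $w$ is carefully synchronizing, so along the entire run from any state of $Q_j$ every letter of $w$ is read through an original transition of that copy and the run is confined to $Q_j$, ending at $q_l^j\ne q_l^i$. I would therefore state this confinement property carefully, reusing the inductive argument of Lemma~\ref{lemma:1}, before concluding that $r$ separates the copies and hence that its fibres yield the desired partition.
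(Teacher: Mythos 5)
Your proposal is correct and takes essentially the same route as the paper's own proof: both evaluate the action of $w$ on every state of $P$ in the DFA obtained by deleting the $\{0,1\}$-transitions (the paper records the full trajectories in a $(|w|+1)\times|P|$ array, you keep only the endpoints) and then recover each block $Q_i$ as the fibre of states mapped to $q_l^i$, with correctness resting on the confinement argument of Lemma~\ref{lemma:1}. The only difference is cosmetic: your endpoint-only bookkeeping uses $O(|P|+|w|)$ space instead of the array's $O(|P||w|)$, which still fits the stated bound.
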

\begin{proof}
We describe a desired algorithm. Suppose we have an array with $|P|$ columns and $|w| + 1$ rows. Put every element of $P$ in a different column of a first row. Then we fill the $i$-th row by taking a state from the $(i-1)$-th row of the corresponding column and applying to it the $i$-th letter of a word $w$ until the end of the row. After this procedure, from lemma \ref{lemma:1}, the last row contains only the states from the set $\{q_l^1, q_l^2, .., q_l^{|u| + 1}\}$. We can now compute each $Q_i$ by taking those states from the first row that lie in the same columns as the state $q_l^i$.
\end{proof}

\noindent With these two lemmas we are ready to present a decryption method:
\begin{enumerate}
    \item \label{dec:point1} using Lemma \ref{lemma:1} compute the set $\{q_l^1, q_l^2, .., q_l^{|u| + 1}\}$,
    \item \label{dec:point2} using Lemma \ref{lemma:2} compute the partition of $P$ on sets $Q_1$, $Q_2$, ..., $Q_{|u|+ 1}$,
    \item \label{dec:point3} for every transition $x \in \{0,1\}$ in $\mathcal{B}$ if $x$ joins a states from different sets, say $Q_i$ and $Q_j$, then join $q_l^i$ and $q_l^j$ with transition $x$, otherwise remove the transition.
\end{enumerate}
Observe, that after applying that procedure to the ciphertext $\mathcal{B}$ we end up with a graph that was our plaintext, what can be concluded directly from the encryption procedure.
In general one can decipher the message only by knowing any carefully synchronizing word for $\mathcal{A}$ or computing every possible induced subautomaton isomorphic to $\mathcal{A}$.
\color{black}

\section{Extensions}
\label{section:extensions}
As the ciphertext which is a result of our encryption method consists of $n$ copies of isomorphic automaton with added transitions between those copies one can think of more "sophisticated" method of creating a ciphertext. As mentioned in the previous section a potential attacker can decipher the message computing every possible induced subautomaton isomorphic to a public key. However, the problem of determining for two given graphs say $G$ and $H$, whether $G$ has a copy of $H$ as an induced subgraph is $NP$-complete \cite{10.5555/574848}.  In this section we present two lemmas that can be used to obfuscate the ciphertext even more. The first one involves adding the state to the public key and the second one adding arbitrary number of $a$-clusters to the ciphertext. 

\begin{lemma}
\label{lemma:add_state}
Let $\mathcal{A} = (Q, \Sigma, \delta)$ be a PFA with carefully synchronizing word $w$. Further, let $q \in Q$ be such that there exists $p \in Q$ such that $q \in p.a^{-1}$. Let also $\mathcal{A}' = (Q \cup \{q'\}, \Sigma, \delta')$ where $\delta'$ is defined as $\delta$ on $Q$ and $\delta'(q',a) = q$. Then $w$ carefully synchronizes $\mathcal{A}'$.
\end{lemma}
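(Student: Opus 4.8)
The plan is to reduce the whole claim to a statement about the single new state $q'$, and then to exhibit $q'$ as a ``shadow'' of an existing state along the computation of $w$. First I would dispose of the old states. Since $\delta'$ restricts to $\delta$ on $Q$ and the only transition issued from $q'$ is $\delta'(q',a)=q\in Q$, no transition of $\mathcal{A}'$ has $q'$ as its target; in particular no computation started inside $Q$ ever visits $q'$. Hence for every $s\in Q$ and every prefix $w'$ of $w$ we have $\delta'(s,w')=\delta(s,w')$, all of which are defined and satisfy $\delta'(s,w)=\overline{q}$ because $w$ carefully synchronizes $\mathcal{A}$. So the adjunction of $q'$ leaves the behaviour of $Q$ untouched, and everything hinges on the trajectory of $q'$.

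Next I would pin down the first letter of $w$. Careful synchronization requires that the first letter of $w$ act on \emph{every} state of $Q$; under the standing convention that $a$ is total while $b$ is in general partial, this forces that first letter to be $a$, so I write $w=a\,w''$. This is precisely what makes $q'$ usable, since its only defined letter is $a$: reading it moves $q'$ after one step to $\delta'(q',a)=q$.

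The heart of the proof is to check $\delta'(q',w)=\overline{q}$ together with definedness of all prefixes, and this is where the hypothesis on $q$ is used. It must be read as supplying an $a$-predecessor of $q$, i.e. a state $p\in Q$ with $\delta(p,a)=q$. Using $\delta'=\delta$ on $Q$ I would then compute
\[
\delta'(q',w)=\delta'(q',a\,w'')=\delta(q,w'')=\delta(\delta(p,a),w'')=\delta(p,a\,w'')=\delta(p,w)=\overline{q},
\]
so that from the second letter onward $q'$ simply rides along the trajectory of $p$. The same identity applied to each prefix $a\,w'''$ of $w$ gives $\delta'(q',a\,w''')=\delta(p,a\,w''')$, which is defined because $w$ carefully synchronizes $\mathcal{A}$ and hence all prefix-transitions issued from $p$ are defined. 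Together with the first paragraph this shows $\delta'(\,\cdot\,,w)\equiv\overline{q}$ on $Q\cup\{q'\}$ with every prefix defined, which is the definition of $w$ carefully synchronizing $\mathcal{A}'$.

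The main obstacle is conceptual rather than computational, and I would flag it explicitly: $q'$ cannot be synchronized by $w$ for an arbitrary target $q$, but only when $q$ is the $a$-image of some state $p$, for then reading $a$ from $q'$ places it exactly where $p$ already stands after the first letter of $w$. If $q$ had no $a$-predecessor, $q'$ would end up ``one step out of phase'' and synchronization would fail; a small three-state example confirms this. Thus the crucial points to verify carefully are (i) that the first letter of $w$ is forced to be $a$ by totality of $a$ (and partiality of $b$), and (ii) that the hypothesis genuinely furnishes a $p$ with $\delta(p,a)=q$, since it is this incoming $a$-edge into $q$, not merely an outgoing one, that drives the argument.
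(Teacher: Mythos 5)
Your proof is correct and takes essentially the same route as the paper's: force the first letter of $w$ to be $a$ (totality of $a$, partiality of $b$), use the $a$-predecessor $p$ of $q$ to conclude that $(Q\cup\{q'\}).a = Q.a$, and then let the remainder of $w$ act identically because $\delta'$ agrees with $\delta$ on $Q$. Your pointwise ``$q'$ rides along with $p$'' computation is just the state-level version of the paper's set identity $Q'.aw' = Q.aw' = Q.w$, and your reading of the hypothesis as supplying $p$ with $\delta(p,a)=q$ is exactly how the paper's proof uses it.
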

\begin{proof}
Since $a$ is defined for all states of $Q'$, and $|\Sigma| = 2$, then the first letter of $w$ must be $a$. Let $w'$ be the word $w$ without the first letter. Since $\delta'(q',a) = q$ and we assumed that there exist $p \in Q$ such that $q \in p.a^{-1}$ it is straightforward, that $Q'.a = Q.a$. Since we have not added any other transitions to $\mathcal{A}'$ and $\delta'$ is defined as $\delta$ on $Q$, we obtain that $Q'.aw' = Q.aw' = Q.w$ and that concludes the proof. 
\end{proof}
For the next lemma we assume notation as in former part of the paper.
\begin{lemma}
\label{lemma:add_cycle}

Let $\mathcal{B} = \bigcup_{i=1}^k \mathcal{A}_k$ and $m \in \mathbb{N}$ be the smallest integer such that $Q.a^m = Q.a^{m+1}$. Define $B_i=Q_i.a^mb$ and let $\mathcal{C}_1 = (S_1, \{a\}, \eta_1), \ldots, \mathcal{C}_l = (S_l, \{a\}, \eta_l)$ be $a$-clusters with depth $1$ and centers $K_1, \ldots, K_l$ respectively. Let $\mathcal{B}'= \mathcal{B} \cup \bigcup_{i=1}^l \mathcal{C}_i = (P', \Sigma, \rho')$. If we define $b$ transitions for all states $q \in \bigcup_{i=1}^l K_i$ such that there exists $0 < j < k+1$ such that  $\rho'(q,b) \in B_j$ then $P'.w = \{q_l^1, ..., q_l^k\}$.
\end{lemma}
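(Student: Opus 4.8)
The plan is to split $P' = \bigcup_{i=1}^k Q_i \cup \bigcup_{i=1}^l S_i$ and show that $w$ carries every state of $P'$ into $\{q_l^1,\ldots,q_l^k\}$, each $q_l^j$ being attained. For the copies $Q_i$ this is exactly the argument of Lemma~\ref{lemma:1}: no $\Sigma$-transition inside any $Q_i$ was altered (the only transitions added in this construction start at cluster centers), so each $Q_i$ still carries an isomorphic copy of $\mathcal{A}$ and $Q_i.w = q_l^i$. Hence $\bigcup_{i=1}^k Q_i.w = \{q_l^1,\ldots,q_l^k\}$, and it remains to absorb the cluster states into this set.

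First I would pin down the shape of $w$. Since $b$ is not total on $Q$, the first letter of $w$ must be $a$, exactly as in the proof of Lemma~\ref{lemma:add_state}; moreover $w$ must contain at least one $b$, since the $a$-transitions alone cannot synchronize $\mathcal{A}$ (otherwise the depth-$1$ clusters, on which $a$ acts as a permutation of their centers, could never be collapsed into the copies). Write $w = a^p b w'$ with $p \geq 1$, where this $b$ is the first occurrence of that letter. The key structural observation is that the images $Q = Q.a^0 \supseteq Q.a^1 \supseteq \cdots$ form a decreasing chain, since $Q.a^{t+1} = (Q.a^t).a \subseteq Q.a^t$, stabilising exactly at $m$; thus $Q.a^m$ is its \emph{minimum}, so $Q.a^m \subseteq Q.a^p$ for \emph{every} $p \geq 0$. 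Applying $b$ (defined on $Q.a^p$, hence on the subset $Q.a^m$, by careful synchronization) gives $Q.a^m b \subseteq Q.a^p b$, and copywise $B_j \subseteq Q_j.a^p b$. As $w = a^p b w'$ carefully synchronizes $\mathcal{A}$ we have $(Q.a^p b).w' = Q.w = q_l$, and since $w'$ is defined on all of $Q.a^p b$ it is defined on the subset, so by monotonicity of the image $B_j.w' \subseteq \{q_l^j\}$; as $B_j$ is nonempty, $B_j.w' = \{q_l^j\}$.

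Now I would trace an arbitrary cluster state $s \in S_i$ through $w$. Because $\mathcal{C}_i$ has depth $1$ and $p \geq 1$, the prefix $a^p$ is defined on $s$ and pushes it into the center, $s.a^p \in K_i$; the letter $b$ is defined there by construction and sends it into some $B_j$, i.e. $s.a^p b \in B_j$; finally the suffix $w'$, defined on all of $B_j$ by the previous paragraph, yields $s.w = (s.a^p b).w' = q_l^j$. Every transition along this trajectory is defined, so $\bigcup_{i=1}^l S_i.w \subseteq \{q_l^1,\ldots,q_l^k\}$. Combined with the copies this gives $P'.w = \{q_l^1,\ldots,q_l^k\}$.

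The main obstacle is the middle step: recognising that routing the cluster $b$-transitions into the \emph{stabilised} image $B_j = Q_j.a^m b$, rather than anywhere depending on the position $p$ of the first $b$ (which is unknown to whoever builds the clusters), is precisely what makes the construction robust. Because $Q.a^m$ is the least element of the decreasing chain, it sits inside $Q.a^p b$ for every possible $p$, so the added transitions are guaranteed to land on states that $w'$ collapses to the correct $q_l^j$. The only side conditions to track are that $w$ genuinely uses the letter $b$ and that all intermediate transitions stay defined, both of which follow from careful synchronization together with the depth-$1$ hypothesis on the clusters.
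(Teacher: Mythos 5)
Your proof is correct and follows essentially the same route as the paper's: the same decomposition of $P'$ into copies and clusters, the same factorization $w = a^{p}bw'$ at the first $b$, the same use of the decreasing chain $Q \supseteq Q.a \supseteq Q.a^2 \supseteq \cdots$ to conclude $B_j \subseteq Q_j.a^{p}b$ and hence $B_j.w' = \{q_l^j\}$, and the same tracing of cluster states through center, $b$-transition, and $w'$. If anything your write-up is tighter: the paper asserts without justification that the leading block of $a$'s in $w$ has length at most $m$, whereas your observation that $Q.a^m$ is the minimum of the chain (so $Q.a^m \subseteq Q.a^{p}$ for \emph{every} $p$) makes that restriction unnecessary, and you additionally track definedness and nonemptiness explicitly.
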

\begin{proof}
Since $a$ is the only letter defined for all states in $\mathcal{A}$ and $Q.a^m = Q.a^{m+1}$ then $w$ starts with a word $a^{m_1}b$ for $0 < m_1 < m+1$. Note $w = a^{m_1}bw'$ Observe that $Q.a^{i+1} \subseteq Q.a^i$ for all $i \geq 0$. From that we have, that $Q.a^m \subseteq Q.a^{m_1}$ and further for all copies of $\mathcal{A}$ in $\mathcal{B}'$ we obtain that $B_i \in Q_i.a^{m_1}.b$. Also, since the depth of any cluster $\mathcal{C}_i$ is $1$, we have that $P_j.a^{m_1} = K_j$ for all $0 < j < m+1$. Notice that $P' = \bigcup_{i=1}^k Q_i \cup \bigcup_{i=1}^l S_i$, so $$P'.w = \bigcup_{i=1}^k Q_i.w \cup \bigcup_{i=1}^l S_i.w = \bigcup_{i=1}^k Q_i.a^{m_1}bw' \cup \bigcup_{i=1}^l S_i.a^{m_1}bw'$$ which gives 
$$P'.w  = \bigcup_{i=1}^k B_iw' \cup \bigcup_{i=1}^l K_ibw'.$$
But we know, that for all $q \in \bigcup_{i=1} K_i$ there exists $0 < i < k+1$ such that  $\delta'(q,b) \in B_i$. From that we obtain $$P'.w  = \bigcup_{i=1}^k B_iw',$$ and since each $B_i = Q_i.a^mb$ then $B_i.w' = Q_i.a^mbw' = Q_i.w = \{q_l^i\}$ and that concludes the proof.
\end{proof}

Using these two lemmas we can move on to the description of the extended method of encryption and decryption. In the next two subsections we follow the notation provided in sections \ref{encryption_section} and \ref{decryption_section}.

\subsection{Extended encryption}
The extension consists of adding two stages between the \ref{enc:point1} and \ref{enc:point2} stage of encryption method, defining sets $Q_i'$ and substitute them for $Q_i$ in latter stages. Let us state two additional stages:
\begin{enumerate}
    \item \label{ext_enc:point1} add $l$ $a$-clusters with depth $1$ to automaton obtained in stage \ref{enc:point1} and define letters $b$ for centers of those clusters to fulfill assumptions of lemma \ref{lemma:add_cycle} in $\rho$ function (defined in section \ref{encryption_section})
    \item \label{ext_enc:point2}  for each copy $\mathcal{A}_i$ of public key in automaton obtained in previous stage add $k_i$ states and define transitions as in lemma \ref{lemma:add_state} and note the set of the added states in this stage as $A_i$ for each $\mathcal{A}_i$
\end{enumerate}
Now let us define sets $Q_i'$. For clusters $\mathcal{C}_1 = (S_1, \{a\}, \gamma_1), \ldots, \mathcal{C}_l = (S_l, \{a\}, \gamma_l)$ (from stage \ref{ext_enc:point1}) with centers $K_1, \ldots, K_l$ respectively we define sets $C_1, \ldots, C_{|u|+1}$, such that if for $q \in K_i$ it holds $\rho(q, b) \in B_j$ (notation from lemma \ref{lemma:add_cycle}), then $q$ and its branch belong to the set $C_j$. Then define $Q_i' = Q_i \cup A_i \cup C_i$. It is a simple exercise to prove that the sets $Q_1', \ldots, Q_{|u| + 1}'$ form a partition of $P = \bigcup_{i=1}^{|u|+1} Q_i \cup A_i \cup C_i$ which is the set of all states of our ciphertext. The latter stages remain as in section \ref{encryption_section}.

\subsection{Extended decryption}
Algorithm of deciphering is similar to the one described in section \ref{decryption_section}. We state lemmas being in a strict correspondence with those proven in section \ref{decryption_section}.
\begin{lemma}
\label{lemma:extended_1}
Let $\mathcal{B}$ be a ciphertext computed by extended encryption method using public key $\mathcal{A} = (Q, \Sigma, \delta)$ and $Q.w = q_l$. After removing letters $x \in \{0,1\}$ from automaton $\mathcal{B}$ we have that $P.w = \{q_l^1, q_l^2, \ldots, q_l^{|u| + 1}\}$.
\end{lemma}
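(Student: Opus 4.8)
The plan is to reduce this statement to the three lemmas already established — Lemma \ref{lemma:1}, Lemma \ref{lemma:add_state} and Lemma \ref{lemma:add_cycle} — by observing that, once the letters of $\{0,1\}$ are deleted, the state set $P$ of the extended ciphertext splits into three disjoint families whose images under $w$ can be computed independently. Working in the $\Sigma$-reduct of $\mathcal{B}$, which is a DFA by the obvious analogue of Fact \ref{fact:2}, we have
$$P = \bigcup_{i=1}^{|u|+1} Q_i \;\cup\; \bigcup_{i=1}^{|u|+1} A_i \;\cup\; \bigcup_{j=1}^{l} S_j,$$
where $Q_i$ is the $i$-th intact copy of $\mathcal{A}$, $A_i$ is the set of states adjoined in stage \ref{ext_enc:point2}, and $S_j$ is the state set of the $j$-th $a$-cluster adjoined in stage \ref{ext_enc:point1}. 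Since neither new stage alters any $\Sigma$-transition inside a copy $Q_i$, each $Q_i$ remains an isomorphic copy of $\mathcal{A}$ on $\Sigma$; as $w$ carefully synchronizes $\mathcal{A}$, determinism gives $Q_i.w = \{q_l^i\}$ exactly as in Lemma \ref{lemma:1}. This already yields the inclusion $\{q_l^1,\ldots,q_l^{|u|+1}\} \subseteq P.w$.

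For the reverse inclusion I would treat the two kinds of adjoined states separately. For the states of $A_i$, recall that each is added exactly as in Lemma \ref{lemma:add_state}, so its unique $a$-transition leads into $Q_i$, into a state lying in $Q_i.a$. Because $a$ is the only letter defined at every state, the synchronizing word $w$ must begin with $a$; hence $(Q_i \cup A_i).a = Q_i.a$ and, applying the remaining suffix of $w$, $(Q_i \cup A_i).w = Q_i.w = \{q_l^i\}$, so $A_i.w \subseteq \{q_l^i\}$ contributes no new state. For the cluster states $S_j$, I would invoke the computation in the proof of Lemma \ref{lemma:add_cycle}: writing $w = a^{m_1} b w'$ with $0 < m_1 \le m$, the depth-$1$ clusters satisfy $S_j.a^{m_1} = K_j$, the prescribed $b$-transitions send each center into some $B_i = Q_i.a^m b$, and the suffix $w'$ carries $B_i$ to $Q_i.w = \{q_l^i\}$. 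Hence $\bigcup_{j} S_j.w \subseteq \{q_l^1,\ldots,q_l^{|u|+1}\}$ as well, and taking the union of the three contributions gives $P.w = \{q_l^1,\ldots,q_l^{|u|+1}\}$.

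The only point requiring care — and the main obstacle — is verifying that the two families of additions do not interfere, so that the per-family computations above are literally applications of Lemmas \ref{lemma:add_state} and \ref{lemma:add_cycle} rather than a genuinely new argument. The $a$-transitions out of $A_i$ and the $b$-transitions out of the cluster centers are read by $w$ at distinct moments (the initial $a$-block and the first $b$, respectively), and neither modifies the transitions used by the other nor the internal $\Sigma$-transitions of the copies. Since the adjoined state sets $A_i$ and $S_j$ are pairwise disjoint and carry only their newly defined transitions, this independence holds and the three invocations compose cleanly; I would state this disjointness explicitly at the outset so that the three computations can simply be read off from the earlier lemmas.
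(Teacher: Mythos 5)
Your proposal is correct and follows essentially the same route as the paper: the paper's proof likewise reduces the claim to Lemma \ref{lemma:add_cycle} (for the adjoined $a$-clusters), Lemma \ref{lemma:add_state} (for the states adjoined in stage \ref{ext_enc:point2}), and the argument of Lemma \ref{lemma:1} (for the intact copies and the restriction to the $\Sigma$-reduct). The only difference is presentational --- the paper applies the two lemmas sequentially (clusters first, then added states) and is terser about the non-interference of the two kinds of additions, which you spell out explicitly.
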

\begin{proof}
Observe that after stage \ref{ext_enc:point1} we can apply Lemma \ref{lemma:add_cycle} and we obtain that $(\bigcup_{i=1}^{|u|+1} Q_i \cup C_i).w = \{q_l^1, q_l^2, .., q_l^{|u| + 1}\}$. Notice that after stage \ref{ext_enc:point2} we can apply Lemma \ref{lemma:add_state} to any copy of public key that was modified by that stage and also $P.w = \{q_l^1, q_l^2, .., q_l^{|u| + 1}\}$. The rest of the proof is similar to the proof of Lemma \ref{lemma:1}.
\end{proof}

\begin{lemma}
\label{lemma:extended_2}
There exist an algorithm with polynomial time complexity (depending on $|P|$ and $|w|$) which computes a partition of $P$ on sets $Q_1'$, $Q_2'$, ..., $Q_{|u|+ 1}'$. 
\end{lemma}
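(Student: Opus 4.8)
The plan is to reuse, essentially verbatim, the trajectory-tracking algorithm of Lemma \ref{lemma:2} and then argue that the partition it produces is exactly $\{Q_1', \ldots, Q_{|u|+1}'\}$. By Fact \ref{fact:2} the restriction of the ciphertext to $\Sigma = \{a,b\}$ is a total DFA, so for every state $s \in P$ the trajectory $s, s.w_1, s.w_1w_2, \ldots, s.w$ under the private key $w$ is well defined and can be produced by filling a table with $|P|$ columns and $|w|+1$ rows exactly as before: the top row lists the states of $P$, and each entry of row $i$ is obtained from the entry above it by applying the $i$-th letter of $w$. This costs $O(|P|\,|w|)$ time and space, which is polynomial as required. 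By Lemma \ref{lemma:extended_1} the bottom row contains only states of $\{q_l^1, \ldots, q_l^{|u|+1}\}$, so grouping the columns according to the landing state $s.w$ yields $|u|+1$ classes $\tilde Q_1, \ldots, \tilde Q_{|u|+1}$ in one further $O(|P|)$ pass. The complexity part of the statement is thus immediate, and the real content is the identification $\tilde Q_i = Q_i'$.

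I would reduce this identification to a single containment. Both $\{Q_i'\}$ (a partition of $P$ by the construction in the extended-encryption subsection) and $\{\tilde Q_i\}$ are partitions of $P$ into $|u|+1$ blocks, so it suffices to prove $Q_i' \subseteq \tilde Q_i$ for every $i$, i.e. that every state of $Q_i'$ lands on $q_l^i$; disjointness then forces equality block by block. I would check this separately for the three kinds of states in $Q_i' = Q_i \cup A_i \cup C_i$. For $s \in Q_i$ the claim is immediate, since $Q_i$ carries an isomorphic copy of $\mathcal{A}$ on $\Sigma$ and $Q_i.w = \{q_l^i\}$. For $s \in A_i$ I would invoke Lemma \ref{lemma:add_state}: the word $w$ begins with $a$, the added transition sends $s$ into the image $Q_i.a$, and the suffix of $w$ then drives all of $Q_i.a$ to $q_l^i$, so $s.w = q_l^i$.

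The main obstacle is the cluster part $C_i$, and this is where I expect the argument to need the most care. Writing $w = a^{m_1} b\, w'$ with $1 \le m_1 \le m$ as in the proof of Lemma \ref{lemma:add_cycle}, a state $s$ of a depth-$1$ cluster first reaches the center under $a^{m_1}$, then takes its decisive $b$-transition into some block $B_j$, and finally is carried by $w'$ to $q_l^j$, since that proof already establishes $B_j.w' = \{q_l^j\}$. The delicate point is that the leading block $a^{m_1}$ \emph{permutes} the center cycle before the $b$-step, so the block a cluster state actually enters is determined by $s.a^{m_1}$ rather than by $s$, whereas the encryption assigned $s$ to $C_j$ according to the $b$-image of the center state underlying it. I would therefore track the center representative $s.a^{m_1}$ explicitly and match it against the defining rule of the $C_j$ (membership is dictated precisely by which $B_j$ the governing $b$-transition hits), verifying that the routing used at encryption time is invariant along the orbit traversed by $a^{m_1}$ so that the landing class coincides with the encryption-time label. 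Once the three cases give $Q_i' \subseteq \tilde Q_i$, the partition argument upgrades this to $Q_i' = \tilde Q_i$, and the algorithm of the first paragraph outputs the required partition within the stated polynomial bound.
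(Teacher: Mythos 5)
Your table-filling algorithm and complexity bound are fine, and the cases $s \in Q_i$ and $s \in A_i$ do land on $q_l^i$; but the identification $\tilde Q_i = Q_i'$ on which your whole argument rests is false, and the invariance you defer to (``the routing used at encryption time is invariant along the orbit traversed by $a^{m_1}$'') is precisely what fails. Nothing in the construction forces the $b$-transitions of distinct center states of one cluster to point into the same block: Lemma \ref{lemma:add_cycle} only requires each center state to send its $b$-transition into \emph{some} $B_j$, and the sets $C_j$ are defined state by state (a center state $q$ and its branch go to $C_j$ exactly when $\rho(q,b) \in B_j$). The paper's own example (Figure \ref{fig:extended_step}) realizes this: the added cluster $\{4,5,6,7\}$ is split, with $4,5 \in Q_2'$ and $6,7 \in Q_3'$. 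Since $w$ must begin with $a$ (so $m_1 \geq 1$) and $a^{m_1}$ rotates the center cycle, a center state $p \in C_i$ satisfies $p.w = (p.a^{m_1}).bw'$, which lands on $q_l^j$ where $j$ is the routing of $p.a^{m_1}$, not of $p$. In the example, state $5$ routes to copy $2$ but its $a$-successors $6,7$ route to copy $3$, so $5 \in Q_2'$ while $5.w = q_l^3$: the containment $Q_i' \subseteq \tilde Q_i$ fails, and no partition argument can repair it, because the trajectory partition simply is not $\{Q_i'\}$.

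The paper's proof avoids this trap by never claiming the trajectory classes equal the $Q_i'$. It uses the table of Lemma \ref{lemma:2} only to recover $Q_i \cup A_i$ (those states genuinely land on $q_l^i$, and any other state of the trajectory class must lie in a cluster meeting $C_i$), which in particular makes the targets $B_i = Q_i.a^m b$ computable; then it computes each $C_i$ \emph{structurally} rather than from trajectories: extract all $a$-clusters of the ciphertext as connected components after deleting the $b$-, $0$- and $1$-transitions, and assign a center state $p$, together with every branch whose destination is $p$, to $C_i$ precisely when $\rho(p,b) \in B_i$. That membership test reads off the stored $b$-transition of $p$ itself, so it is immune to the rotation by $a^{m_1}$, and it is clearly polynomial. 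To salvage your write-up you would have to replace the claimed identity by this two-stage computation: trajectories for $Q_i \cup A_i$, and the $b$-routing rule for $C_i$.
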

\begin{proof}
Using approach from the proof of Lemma \ref{lemma:2} we can compute similar matrix, say $M$, in time $O(|P||w|)$. From Lemma \ref{lemma:extended_2} we know the last row contains only the states from the set $\{q_l^1, q_l^2, 
\ldots, q_l^{|u| + 1}\}$ and we can compute sets $\bar{Q}_1, \ldots, \bar{Q}_{|u| + 1}$ such if column of the first row containing $q$ is the same as the column of the last row containing $q_l^i$, then $q \in \bar{Q}_i$. Notice that there are three cases, when $q \in \bar{Q}_i$:
\begin{itemize}
    \item $q \in Q_i$
    \item $q \in A_i$
    \item $q \in S_m$ such that there exist $p \in C_i \cap S_m$ (notation from Lemma 4)
\end{itemize}
First two cases are straightforward. To prove the theorem for the third case observe that if $q \notin Q_i \cup A_i$, then $q \notin A_j$ and $q \notin Q_k$ for any $j,k \neq i$ otherwise $\mathcal{B}$ would be non-deterministic. So we deduce that $q \in S_m$ for some $m$. For the sake of contradiction suppose that $C_i \cap S_m = \varnothing$. But that means, that $q.a^mb \in B_j$ for $j \neq i$ and further $q.ab^mw' = q.w = q_l^j$ what is a contradiction. From these considerations we are able to determine for each $i$ the sets $A_i$ and $Q_i$ that are subsets of the set $Q_i'$.
In order to compute the sets $C_i$ we first compute $S_1',.., S_n'$ inducing all $a$-clusters in $\mathcal{B}$ by removing $b, 0, 1$ transitions and determine all connected components of the resulting structure. Now we examine three cases for a cluster $S_j'$:
\begin{itemize}
    \item $S_j' \cap \bar{Q}_i = \varnothing$
    \item $S_j' \subseteq \bar{Q}_i$
    \item $S_j' \cap \bar{Q}_i \neq \varnothing$ and $S_j' \not\subset \bar{Q}_i$
\end{itemize}

Notice, that if the first case holds we know that no state of $S_j'$ belongs to $C_i$. If the second case holds, we must check if $S_j' \subseteq Q_i \cup A_i$. If this is not true, then we have found a cluster $\mathcal{C}_m$, such that for all $q \in K_m$ it holds $\rho(q, b) \in B_i$ and we determined the $a$-cluster that belongs to $C_i$. In the third case we know that some of the states of the cluster $S_j'$ are in $C_i$ and some are not. To compute those that are let us take the center of the $a$-cluster $S_j'$, say $K_j'$, and observe that $q \in C_i$ if, and only if $q \in \{p \in K_j': \rho(p, b) \in B_i\} = K_j''$ or $q$ belongs to some branch with destination in $K_j''$. That concludes the proof.
\end{proof}
Using two former lemmas, decryption method is similar as in \ref{decryption_section}. Extended step is depicted on Figure \ref{fig:extended_step}. If we choose the public key to be the automaton on Figure \ref{fig:automaton}, then notice, that in Lemma $\ref{lemma:add_cycle}$ we have $m=2$, and $Q.a^2b = \{2,3,7,12,13\}$.

\begin{figure}[H]
    \centering
    \input{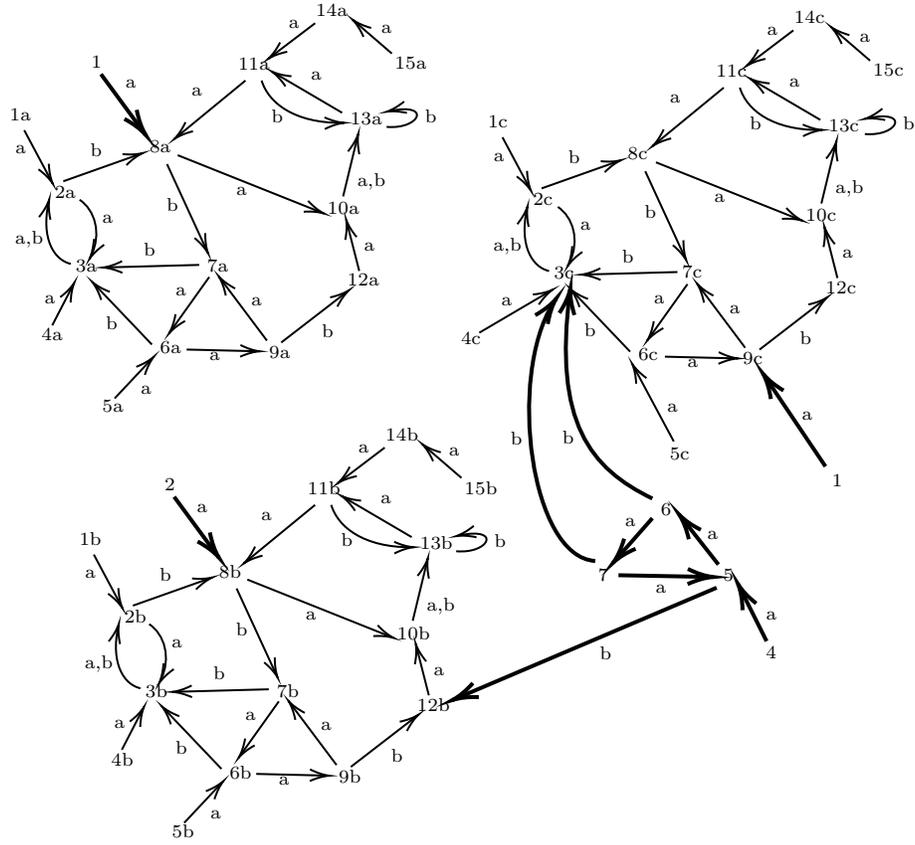}
    \caption{Extended step of encryption.}
    \label{fig:extended_step}
\end{figure}
Observe that we can apply Lemma \ref{lemma:add_state} to states $1,2,3$. We also added cluster that consists of states $4,5,6,7$ so we can apply Lemma \ref{lemma:add_cycle}. In the former method we had that $Q_1 = \{1a,..,15a\}, Q_2 = \{1b,..,15b\},Q_3 = \{1c,..,15c\}$ and now $Q_1' = Q_1 \cup \{1\}, Q_2' = Q_2 \cup \{2,4,5\}, Q_3' = Q_3 \cup \{3,6,7\}$.
\section{Conclusions and further work}
\label{section:conclusions}
We proposed a method of utilizing careful synchronization to provide brand new public key cryptosystem. In sections \ref{encryption_section} and \ref{decryption_section} we presented core idea of our method and provided an example that illustrates it. As the ciphertext in that method consists of $n$ copies of the same automaton, those two sections are included to so the reader could understand the method presented in section \ref{section:extensions}.
\newline It should be also mentioned that lemmas \ref{lemma:add_cycle} and \ref{lemma:add_state} are only examples of extensions of that cryptosystem. Indeed, observe that Lemma \ref{lemma:add_cycle} provides a possibility to add "free" $a$-clusters to a ciphertext. The disadvantage of that extension is that we only can add $b$ transitions to states that are some specified states of the copy of the public key. It is possible also to add the extension, that allows us to define the $a$-clusters for which we can define $b$ transition outside of that specific sets $B_i$ but to whatever state we want, even the other added $a$-cluster. However this extension would cause that in Lemma \ref{lemma:extended_1} it would be only $\{q_l^1, .., q_l^{|u| + 1}\} \in P.w$ so the number of states added in such extensions would have been bounded by $\text{min}(|Q_1'|,.., |Q_l'|)$ and also demanded modifications in Lemma \ref{lemma:extended_2} so we omitted that extension.
\newline Observe also that point \ref{enc:point2} of encryption procedure can be modified in many ways. For example one can choose to define more than one transition between copies of automata and in decryption section choose the one that has odd or even number in a ciphertext.
We end up with several questions and open problems:
\begin{question}
What is the most reasonable way to define lacking transitions in point \ref{enc:point3}?
\end{question}
It is straightforward that if all lacking transitions in a copy of a public key are defined within the same copy, it would result with $|u| + 1$ connected automata which are not connected between each other, and that simplifies the attack on the cryptosystem. 
\begin{question}
What is the most reasonable way to define transitions in point \ref{enc:point4}?
\end{question}
We have defined step \ref{enc:point4} in an abstract way, so to investigate many versions of adding those "obfuscating" $\{0,1\}$ transitions.
\begin{question}
Find an algorithm that generates pairs of public and private keys.
\end{question}
We believe that the most promising approach will be to construct a PFA that is carefully synchronized by a given $w$. We also want to investigate if it is possible to design an algorithm that for a given word $w$ generates $n$ non-isomorphic PFA's that are carefully synchronized by $w$. Having that one could take as a public key a tuple of $n$ automata that are synchronized by the same word $w$. In that case, all methods presented in the paper would need only slight modifications to work properly. 
\bibliography{bibliography.bib}

\end{document}